\newcommand{\bp}{\begin{proof} \small }
\newcommand{\ep}{\end{proof} \normalsize}
\newcommand{\epx}{\end{proof} \small}
\newcommand{\bpa}{\begin{proofappx} \footnotesize }
\newcommand{\epa}{\end{proofappx} \small }
\newtheorem{theorem}{Theorem}
\newtheorem{proposition}{Proposition}
\newtheorem*{theorem*}{Theorem}
\newtheorem*{proposition*}{Proposition}
\newtheorem*{corollary*}{Corollary}
\newtheorem*{lemma*}{Lemma}
\newtheorem*{assumption*}{Assumption}
\newtheorem*{definition*}{Definition}
\newtheorem*{claim*}{Claim}
\newcommand{\be}{\begin{equation}}
\newcommand{\ee}{\end{equation}}
\newcommand{\bs}{\begin{subequations}}
\newcommand{\es}{\end{subequations}}
\newcommand{\bq}{\begin{eqnarray}}
\newcommand{\eq}{\end{eqnarray}}
\newcommand{\bqn}{\begin{eqnarray*}}
\newcommand{\eqn}{\end{eqnarray*}}
\newcommand{\ba}{\left[ \begin{array}}
\newcommand{\ea}{\\ \end{array} \right]}
\newcommand{\ben}{\begin{enumerate}}
\newcommand{\een}{\end{enumerate}}
\def\a{{\boldsymbol{a}}}
\def\b{{\boldsymbol{b}}}
\def\d{{\boldsymbol{d}}}
\def\q{{\boldsymbol{q}}}
\def\real{{\mathchoice%
{\hbox{\rm\setbox1=\hbox{I}\copy1\kern-.45\wd1 R}}
{\hbox{\rm\setbox1=\hbox{I}\copy1\kern-.45\wd1 R}}
{\hbox{\scriptsize\rm\setbox1=\hbox{I}\copy1\kern-.45\wd1 R}}
{\hbox{\scriptsize\rm\setbox1=\hbox{I}\copy1\kern-.45\wd1 R}}}}
\def\Zint{{\mathchoice{\setbox1=\hbox{\sf Z}\copy1\kern-.75\wd1\box1}
{\setbox1=\hbox{\sf Z}\copy1\kern-.75\wd1\box1}
{\setbox1=\hbox{\scriptsize\sf Z}\copy1\kern-.75\wd1\box1}
{\setbox1=\hbox{\scriptsize\sf Z}\copy1\kern-.75\wd1\box1}}}
\newcommand{\complex}{ \hbox{\rm C\kern-0.45em\rule[.07em]{.02em}{.58em}%
\kern 0.43em}}
\newcommand{\algmargin}{\the\ALG@thistlm}
\newlength{\whilewidth}
\algnewcommand{\parState}[1]{\State%
	\parbox[t]{\dimexpr\linewidth-\algmargin}{\strut #1\strut}}
\begin{document}
\title{Adaptive Fog Configuration for the Industrial Internet of Things}
%
%
%

\author{Lixing~Chen,
        Pan~Zhou,~\IEEEmembership{Member,~IEEE,}
        Liang~Gao,~\IEEEmembership{Member,~IEEE,}
        Jie~Xu,~\IEEEmembership{Member,~IEEE}
\thanks{Manuscript received January 10, 2018; revised April 03, 2018; accepted June 04, 2018; date of current version Jun 07, 2018. L. Chen and J. Xu are with the Department of Electrical and Computer Engineering, University of Miami. Email: lx.chen@miami.edu, jiexu@miami.edu. P. Zhou and L. Gao are with Huazhong University of Science and Technology (HUST). Email: panzhou@hust.edu.cn, gaoliang@mail.hust.edu.cn.}
}

\maketitle

\begin{abstract}
Industrial Fog computing deploys various industrial services, such as automatic monitoring/control and imminent failure detection, at the Fog Nodes (FNs) to  improve the performance of industrial systems. Much effort has been made in the literature on the design of fog network architecture and computation offloading. This paper studies an equally important but much less investigated problem of service hosting where FNs are adaptively configured to host services for Sensor Nodes (SNs), thereby enabling corresponding tasks to be executed by the FNs. The problem of service hosting emerges because of the limited computational and storage resources at FNs, which limit the number of different types of services that can be hosted by an FN at the same time. Considering the variability of service demand in both temporal and spatial dimensions, when, where, and which services to host have to be judiciously decided to maximize the utility of the Fog computing network. Our proposed Fog configuration strategies are tailored to battery-powered FNs. The limited battery capacity of FNs creates a long-term energy budget constraint that significantly complicates the Fog configuration problem as it introduces temporal coupling of decision making across the timeline. To address all these challenges, we propose an online distributed algorithm, called Adaptive Fog Configuration (AFC), based on Lyapunov optimization and parallel Gibbs sampling. AFC jointly optimizes service hosting and task admission decisions, requiring only currently available system information while guaranteeing close-to-optimal performance compared to an oracle algorithm with full future information.
\end{abstract}

\begin{IEEEkeywords}
fog computing, industrial Internet of things, energy efficiency, distributed algorithms.
\end{IEEEkeywords}

%
\IEEEpeerreviewmaketitle

\section{Introduction}
The Industrial Internet of Things (IIoT) integrates the physical industrial environment into computer-based systems, resulting in improved efficiency, accuracy and economic benefit in addition to the reduced human intervention \cite{samad2016control}. Traditionally, industrial applications are founded on a centralized model of data processing and analytics: data generated by industrial devices are transported over the Internet infrastructure to a central computing facility (typically a cloud) where intensive data processing are carried out \cite{givehchi2013industrial}. However, the ever-growing distributed industrial data renders it impractical to transport all data over today's already-congested backbone Internet. Moreover, due to the unpredictable network latency, data processing in the cloud often cannot meet the stringent latency requirements of monitoring and controlling critical industrial devices \cite{suto2015energy}.

To overcome these limitations, Fog computing has recently been integrated into IIoT to support the operating environment featured by real-time response and high automation, which exploits the spare computing resources of edge devices (e.g. network gateways) to relieve backbone traffic burden and enable ultra-low latency response \cite{loshin2015intelligent}. In a typical scenario, industrial devices are equipped with smart Sensor Nodes (SNs) which collect data and perform first order operations (e.g. filtering, aggregation, and translation) on the raw data. A cohort of SNs can be logically clustered around and communicate with a Fog Node (FN) that provides a richer computing resource. These FNs receive streamed data from SNs, perform more complex analysis on the received data, and derive actionable intelligence to maintain the devices within its ``sphere of influence''. They also have the option of further offloading workload exceeding their computing capacity to the cloud as a second choice, resulting in a hierarchical Fog computing architecture \cite{azimi2017hich} (see Fig \ref{fig:sysillu} for an illustration). Task offloading has been a central theme of many prior works \cite{mao2017mobile,xu2017online}, which concerns what/when/how to offload workload of end devices to FNs. This literature assumes that FNs can process whatever types of service demand received without considering the availability of services. However, unlike the centralized cloud which has huge and diverse resources, the limited computing/storage resources of FNs allow only a small set of services to be hosted at the same time \cite{yang2016cost}. Because different industrial devices differ in their functionality and require different services to analyze the sensed data, which services are hosted by the FN determines which devices can be maintained at the network edge, thereby affecting the performance of Fog computing.

\begin{figure}[t]
	\centering
	\includegraphics[width= 0.55 \linewidth]{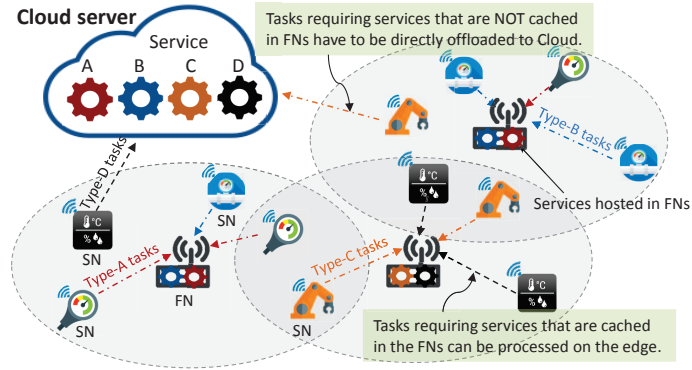}
	\caption{Illustration of adaptive Fog configuration}
	\label{fig:sysillu}
\end{figure}

Optimally configuring the Fog system (i.e., which services are hosted by which FNs) is a very challenging problem for IIoT. First, industrial devices are heterogeneous in terms of required service types and corresponding service demand. While the former is often fixed, the latter is changing over time since devices follow different operation schedules (regular maintenance or event-driven). Therefore, FNs must be adaptively configured to track the temporal variations of the demands for different services. Second, to accommodate more service demand at the Internet edge, FNs are usually densely deployed and hence an SN may be in the coverage of multiple FNs. On the one hand, the overlapping coverage allows FNs to collaboratively serve SN's demand. On the other hand, it creates a complex multi-cell setting where demand and resources are highly coupled in the spatial domain. Effective Fog configuration requires careful coordination among all FNs, and distributed solutions are in much favor. Third, FNs are deployed in a ``drop-and-play'' fashion to enable Fog computing on the existing infrastructure. In this scenario, FNs may not be powered by main electric grids but have to rely on batteries (or renewable energy sources) \cite{conti2017battery}. The battery energy constraints couple the fog configuration decisions over time, yet decisions have to be made without foreseeing the future system dynamics.
To address these challenges, we develop a novel online framework for adaptive Fog configuration.

In the conventional cloud computing context, virtual machine placement problems have been studied \cite{tordsson2012cloud}, and optimizing service placement over multiple cloudlets was investigated in \cite{yang2016cost}. However, these works do not consider the overlapping coverage areas or the battery energy constraints of FNs, and their algorithms are centralized. Content caching in Fog systems is also related to service hosting considered. While content caching \cite{shanmugam2013femtocaching} mainly deals with storage capacity constraints, our fog configuration strategy aims to improve computation delay performance withe energy budget constraints. Our main contributions are as follows:

(1) We formalize the adaptive Fog configuration problem for IIoT as a mixed-integer nonlinear stochastic optimization problem with long-term constraints. We jointly optimize service hosting and task admission of a network of FNs in order to minimize the time-average computation delay cost while satisfying the long-term battery energy constraints of FNs.

(2) A novel algorithm, called AFC, is developed for adaptive Fog configuration under the Lyapunov optimization framework \cite{neely2010stochastic}. AFC executes in an online fashion by separating the long-term problem into a sequence of per-slot subproblems that are solvable with only currently available system information. We prove that AFC achieves close-to-optimal service delay while bounding the potential violation of the long-term energy consumption constraint.

(3) To enable distributed coordination among a network of FNs, we develop a distributed algorithm as a key subroutine of AFC to solve each per-slot subproblem. The algorithm is developed based on Gibbs Sampling and leverages Markov Random Field and Graph Theory to enable parallel execution to speed up convergence.

The rest of the paper is organized as follows. Section \ref{sec:system model} presents the system model and problem formulation. Section \ref{sec:AFC} develops the online algorithm AFC and proves its performance guarantee. Section \ref{sec:CPGS} designs a distributed algorithm as a subroutine of AFC. Section\ref{sec:simulation} carries out simulations, followed by the conclusion in Section \ref{sec:conclusion}.
\section{System Model}\label{sec:system model}
\subsection{Industrial Fog System}
We consider an industrial environment comprising different types of devices. A hierarchical industrial Fog system is deployed to automate monitoring and control as well as apply embedded intelligent agents that can adjust device behaviors in relation to ongoing performance variables. Specifically, each device is equipped with a sensor node (SN), which is a low-power wireless device with embedded micro-controller and storage. Code for devices monitoring and control is deployed at SNs. Denote the set of SNs by $\mathcal{M} = \{1,...,M\}$. Besides the SNs, there are $N$ Fog Nodes (FNs) deployed in the network, indexed by $\mathcal{N} = \{1,...,N\}$, acting as wireless network gateways for connecting SNs and providing a richer computing resource that allows more complex analysis of streamed data from SNs for event triggering, predictive modeling of critical events, and notification. We consider that both SNs and FNs are battery-powered to enable flexible deployment.

Each FN serves the demand from SNs within its ``sphere of influence''. Let $\mathcal{M}_n\subset \mathcal{M}$ denote the set of SNs within the wireless transmission range of FN $n$. Due to dense deployment of FNs, an SN can be served by multiple FNs. Let $\mathcal{N}_m$ be the set of FNs reachable by SN $m$. We say that two FNs $i, j \in \mathcal{N}$ are neighbors if there exists some SN $m$ such that $i, j \in \mathcal{N}_m$. In other words, FNs $i$ and $j$ can potentially collaborate to serve at least one common SN. Given this, the Fog network can be described by a graph $G = \langle\mathcal{N}, \mathcal{E}\rangle$, where $\mathcal{E}$ is the edge set and there exists an edge between two FNs if they are neighbors. Let $\Omega_n \subseteq \mathcal{N}$ denote the one-hop neighborhood of FN $n$.

\subsection{Service Hosting and Task Admission}
Industrial SNs differ in functionalities and therefore, different SNs require different services to analyze different types of data. We consider that there are $K$ types of SNs in the network and hence $K$ types of services, indexed by $\mathcal{K} = \{1,2,...,K\}$. For each type-$k$ service demand, its (expected) input data size (in bits) and required number of CPU cycles for one task are $\gamma_k$  and $\mu_k$, respectively. Let $\theta(m) \in \mathcal{K}$ denote the type of SN $m$. Running a particular service requires allocating sufficient computing resource and caching the associated libraries and databases. However, compared to the powerful cloud, FNs are constrained in their computing resource and storage, hence only a limited number of services can be hosted by a FN at a time. We assume that each FN can host at most $C$ types of services. Hosting service $k$ at a FN allows \textit{in-situ} analysis of the streamed data from type-$k$ SNs, thereby enabling prompt response. The data from SNs, whose required services are not hosted at FNs, will be transmitted to the cloud for analysis.

Since industrial devices operate following different schedules, the service demand from SNs varies over time. To track such temporal variations, FNs adaptively reconfigure the hosted services to maximize the performance of the fog network. We consider a slotted operational timeline, where each time slot matches the time scale at which FNs can be reconfigured. The configuration decisions are made in a much slower time scale than task arrivals. During each time slot, task arrivals from SNs are assumed to follow Poisson processes, and the arrival rates in the current time slot are predicted using state-of-the-art prediction algorithms \cite{yoon2016requests}. Such two-scale time system is widely used in the existing literature \cite{maguluri2012stochastic}. The expected service demand of SN $m$ for type-$\theta(m)$ service in time slot $t$ is denoted by $d^t_m$. At the beginning of each time slot $t$, each FN configures itself by choosing what services to host. Let $\a^t_{n} = \{a^t_{n,k}\}_{k=1}^K$ be FN $n$'s (service) hosting decision in time slot $t$ where $a^t_{n,k}\in \{1,0\}$ is a binary variable representing whether service $k$ is hosted or not. The hosting decision has to satisfy the capacity constraint, namely $\sum_{k} a^t_{n,k} \leq C, \forall t$. Let $\mathcal{F}_i$ be the set of all feasible hosting decisions of FN $i$. The hosting profile of the whole Fog network is collected in $\a^t = \{\a^t_n\}_{n\in\mathcal{N}}$. Given the profile $\a^t$, let $\mathcal{B}^t_{m}(\a^t)\subseteq \mathcal{N}$ be the set of FNs that host service $\theta(m)$ and are reachable by SN $m$. We assume that the demand $d^t_m$ of SN $m$ is offloaded to FN $\mathcal{B}^t_{m}(\a^t)$ that has the best uplink channel condition, namely $\arg\max_{n \in \mathcal{B}^t_{m}(\a^t)} H^t_{m,n}$, where $H^t_{m,n}$ is the uplink channel condition between SN $m$ and FN $n$. In this way, SN $m$ incurs the least transmission energy consumption. Nevertheless, other SN-FN association rules can also be easily incorporated in our framework. It is possible that SN $m$ can reach none of the FNs hosting service $\theta(m)$, namely $\mathcal{B}^t_{m}(\a^t) = \emptyset$. In this case, the service demand of SN $m$ is sent to the reachable FN with the best channel condition and then further offloaded to the remote Cloud. To facilitate the exposition, we write $v^t_m(\a^t) \in \{ \mathcal{N} \cup 0\}$ as the FN (or Cloud) that processes the service demand for SN $m$ in time slot $t$:
\begin{equation}
\hspace{-0.1 in} v^t_m(\a^t) = \left\{
\begin{array}{ll}
\arg \max_{n \in \mathcal{B}^t_{m}(\a^t)} H_{m,n}, &\text{if}~\mathcal{B}^t_{m}(\a^t) \neq \emptyset\\
0, &\text{if}~\mathcal{B}^t_{m}(\a^t) = \emptyset
\end{array}\right.
\end{equation}

Because FNs are battery-powered, in addition to what services to host, they also decide the amount of workload to process by itself to extend the battery lifetime. Let $b^t_n \in [0,1]$ be the fraction of service demand admitted by FN $n$. Note that the actual task admission will be decided during the time slot when the specific service demand is received depending on its type and priority. Nevertheless, the task admission decisions can still be planned at a reasonably high granularity at the beginning of each time slot. We collect the task admission decisions of all FNs in time slot $t$ in $\b^t = \{b^t_n\}_{n \in \mathcal{N}}$.

\subsection{Energy Consumption and Service Delay}
Different service hosting and task admission decisions shape the service demand distribution among the FNs and the Cloud in different ways, resulting in different energy consumption of the FNs and service delay. Let $\lambda^t_{n,k}$ be the type-$k$ demand received by FN $n$, which can be computed as:
\begin{align}
\lambda^t_{n,k}(\a^t) = \sum_{m: \theta(m) = k, v^t_m(\a^t) = n} d^t_m(\a^t) .
\end{align}

\subsubsection{Energy consumption}
To simplify our analysis, we assume that the FN processes tasks at its maximum CPU speed and chooses the minimum CPU speed when it is idle. Then, based on the energy consumption model in \cite{mao2017mobile} the computation energy consumption can be expressed as:
\begin{align}
E^t_n(\a^t,\b^t) = E^o_n + \kappa_n b^t_n \sum\nolimits_{k}\mu_k \lambda^t_{n,k}(\a^t),
\end{align}
where $E^o_n$ is the static energy consumption regardless of the workload as long as FN $n$ is turned on; $\kappa_n = cf_n^2$ is the unit energy consumption for one CPU cycle depending on the CPU architecture parameter $c$ and CPU frequency $f_n$; $b^t_n \sum_{k}\mu_k \lambda^t_{n,k}(\a^t)$ is the total number of CPU cycles required to process service demand received by FN $n$.

\subsubsection{Service delay}
The service delay consists of computation delay and communication delay. The computation delay is incurred by task processing that happens at either the FNs or cloud. Following the Fog server computation model in \cite{lyu2017multiuser}, the computation delay for one type-$k$ service demand is $\mu_k/f_n$ if it is processed at FN $n$ and is $\mu_k/f_0$ if it is processed at Cloud, where $f_n$ and $f_0$ is the CPU frequency at FN $n$ and Cloud, respectively. Usually, we have $f_0 > f_n$.

The communication delay are incurred during wireless transmission between SNs and FNs and, wired transmission between FNs and cloud via the backbone Internet. Since the wireless transmission delay for each SN-FN pair is similar and much smaller compared to backbone transmission \cite{hu2016quantifying}, their impact on the service delay can be neglected. Therefore, we focus on the backbone transmission delay. Let $r^t$ be the backbone transmission rate and $h^t$ be the Round-Trip Time to Cloud, the service delay cost for SN $m$ can be obtained as:
\begin{align}
&D^t_m(\a^t,\b^t) =\\
&\left\{
\begin{array}{ll}
b^t_n d^t_m \dfrac{\mu_{\theta(m)}}{f_n}\\ + (1-b^t_n) d^t_m \left(\dfrac{\mu_{\theta(m)}}{f_{0}} + \dfrac{\gamma_{\theta(m)}}{r^t} + h^t\right), &\text{if}~~v^t_m(\a^t) = n\\
d^t_m \left(\dfrac{\mu_{\theta(m)}}{f_{0}} + \dfrac{\gamma_{\theta(m)}}{r^t} + h^t\right),&\text{if}~~v^t_m(\a^t) = 0
\end{array}\right.\nonumber
\end{align}
where, in the first case, the first term captures the service delay for demands processed at FN $v^t_m(\a^t)=n$ and the second term captures the service delay for demands processed at Cloud.

\subsection{Offline Problem Formulation}
The goal of Fog network is to minimize the total service delay cost of all SNs while satisfying the energy consumption constraints of FNs. Formally, the offline problem is
\begin{subequations}
	\begin{align}
	\textbf{P1}~~~&\min_{\a^t,\b^t, \forall t } ~~\dfrac{1}{T}\sum_{t=0}^{T-1}\sum_{m=1}^M D^t_m(\a^t, \b^t)\\
	\text{s.t.}~~&\dfrac{1}{T}\sum_{t=0}^{T-1} E^t_n(\a^t, \b^t) \leq Q_n, \forall n \label{cons:LT_eng}\\
	&\sum_{k} a^t_{n,k} \leq C, \forall n, \forall t \label{cons:capacity}\\
	& E^t_n(\a^t, \b^t) \leq E^{\max}_n, \forall n, \forall t \label{cons:E_max}\\
	& \sum_{m=1}^M D^t_m(\a^t, \b^t) \leq D^{\max}, \forall t \label{cons:D_max}
	\end{align}
\end{subequations}
where the first constraint is the long-term energy consumption constraint for each FN, and $T \cdot Q_n$ is the available battery energy of FN $n$ for a period of $T$ time slots. Every $T$ time slots the battery is replenished either manually or via an energy harvesting device. The second constraint is due to FNs' service hosting capacity. The third and fourth conditions impose per-slot constraints on the maximum energy consumption of each FN and the maximum total service delay cost.

There are several challenges that impede the derivation of the optimal solution to the offline problem \textbf{P1}. First, optimally solving \textbf{P1} requires the complete future information (e.g., service demands for all $t$) which is difficult to predict in advance, if not impossible. Second, the long-term energy constraint couples the configuration decisions temporally: consuming more energy in the current slots will reduce the available energy for future use. Third, \textbf{P1} is a mixed integer nonlinear programming which is very difficult to solve even if the future information is known a priori. These challenges call for an efficient online approach that can make Fog configuration decisions with only currently available information.

\section{Online Adaptive Fog Configuration} \label{sec:AFC}
In this section, we develop an online algorithm AFC (Adaptive Fog Configuration) based on Lyapunov optimization. A salient advantage of AFC is that it converts the offline problem \textbf{P1} to a sequence of per-slot optimization problems that are solvable with only currently available information.

\subsection{Online Fog Configuration with Lyapunov Drift}
A major challenge of directly solving \textbf{P1} is that the long-term energy constraint of FNs couples the Fog configuration decisions and task admission decisions across different time slots. To address this challenge, we leverage the \emph{Lyapunov drift} technique and construct a set of (virtual) energy deficit queues $\q(t)=\{q_n(t)\}_{n\in\mathcal{N}}$, one for each FN, to guide the Fog configuration and task admission to follow the long-term energy constraints \eqref{cons:LT_eng}. Initializing $q_n(0) = 0, \forall n\in\mathcal{N}$, the energy deficit queue for FN $n$ evolves as follows
\begin{align}\label{eq:queue_update}
q_n(t+1) = \max\left\{q_n(t) + E^t_n(\a^t, \b^t) - Q_n,0\right\},
\end{align}
The length of $q_n(t)$ indicates the deviation of the current energy consumption of FN $n$ from its long-term energy constraint $Q_n$. Based on the energy deficit queues, we present the online algorithm AFC in Algorithm \ref{alg:AFC}. AFC determines the optimal service hosting profile $\a^{*,t}$ and task admission profile $\b^{*,t}$ in each time slot $t$ by solving the following problem:
\begin{subequations}\label{P2}
	\begin{align}
	\textbf{P2}~~\min_{\a^t, \b^t}~ V \sum_{m=1}^{M} & D^t_m(\a^t, \b^t) + \sum_{n=1}^N q_n(t) E^t_n(\a^t,\b^t) \label{eq:p2_obj}\\
	\text{s.t.}~~&\eqref{cons:capacity}, \eqref{cons:E_max}, \eqref{cons:D_max}
	\end{align}
\end{subequations}
where $V$ is a positive control parameter used to adjust the trade-off between service delay minimization and energy deficit minimization. Note that solving \textbf{P2} requires only currently available information. By considering the additional term $\sum_n q_n(t) E^t_n(\a^t, \b^t)$, AFC takes into account the energy deficit of FNs in decision making. When $\q(t)$ is larger, minimizing the energy consumption is more critical. Thus, AFC works by following the philosophy of ``if violate the energy constraint, then consume less energy'', thereby satisfying the long-term energy constraints without foreseeing the future.
\begin{algorithm}[tb]
	\caption{AFC: Adaptive Fog Configuration}
	\begin{algorithmic}[1]
		\Statex \textbf{Input}: $q_n(0) \gets 0, \forall n \in \mathcal{N}$, $\mu_k$, $\gamma_k$, $C$, $E_n^{\max}$,$D^{\max}$;
		\For{$t=0~\text{to}~T-1$}
		\State Predict service demand $d^t_{m}, \forall m\in\mathcal{M}$;
		\parState{Observe backhaul transmission rate $r^t$, round-trip time $h^t$, uplink channel condition $H^t_{m,n}$;}
		\State Obtain $\a^{*,t}$, $\b^{*,t}$ by solving \textbf{P2}.
		\State Update energy deficit queue for each FN: \Statex \qquad $q_n(t+1) = \max\left\{q_n(t) + E^t_n(\a^t, \b^t) - Q_n,0\right\}$;
		\EndFor
		\Statex \textbf{Return}: $\{\a^{*,t}\}_{t=0}^{T-1}$ and $\{\b^{*,t}\}_{t=0}^{T-1}$.
	\end{algorithmic}\label{alg:AFC}
\end{algorithm}

\subsection{Performance Analysis of AFC}\label{sec:performance}
Next, we provide the performance bound of AFC in terms of the long-term service delay cost and long-term energy consumption compared to the optimal solution of \textbf{P1} obtained by an oracle with full future information.
\begin{theorem}\label{theo:AFC_bound}
	By following the Fog configuration profile $\{\a^{*,t}\}_{t=0}^{T-1}$ and task admission decisions $\{\b^{*,t}\}_{t=0}^{T-1}$ derived by AFC, the long-term service delay cost satisfies:
	\begin{equation*}\label{system_delay_bound}
	\lim_{T\rightarrow\infty}\dfrac{1}{T}\sum_{t=0}^{T-1}\sum_{m=1}^{M}\mathbb{E}\left[D^t_m(\a^{*,t},\b^{*,t})\right] < D^{\text{opt}}+\dfrac{B}{V},
	\end{equation*}
	and the long-term energy deficit of FNs satisfies:
	\begin{align*}\label{energy_defit_bound}
	\lim_{T\rightarrow\infty}\dfrac{1}{T}\sum_{t=0}^{T-1}\sum_{n=1}^{N}&\left(\mathbb{E}\left[E_n^t(\a^{*,t},\b^{*,t})\right]-Q_n\right)\\&\qquad\leq\dfrac{B}{\epsilon}+\dfrac{V}{\epsilon}(D^{\max}-D^{\text{opt}}),
	\end{align*}
	where $B=\frac{1}{2}\sum_{n=1}^{N}\left(E_n^{\max}-Q_n\right)^2$, $D^{\text{opt}}$ is the long-term service delay cost achieved by the optimal solution to \textbf{P1}; and $\epsilon>0$ is a constant which represents the long-term energy surplus achieved by some stationary policy.
\end{theorem}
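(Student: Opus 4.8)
The plan is to follow the standard Lyapunov drift-plus-penalty argument tailored to the per-slot problem \textbf{P2}. Define the quadratic Lyapunov function $L(\q(t)) = \frac{1}{2}\sum_{n=1}^N q_n(t)^2$ and the one-slot conditional drift $\Delta(\q(t)) = \mathbb{E}[L(\q(t+1)) - L(\q(t)) \mid \q(t)]$. First I would square both sides of the queue update \eqref{eq:queue_update}, use the inequality $(\max\{x,0\})^2 \le x^2$, and expand to obtain the drift bound $\Delta(\q(t)) \le B + \sum_n q_n(t)(\mathbb{E}[E_n^t(\a^t,\b^t)\mid\q(t)] - Q_n)$, where the constant $B = \frac{1}{2}\sum_{n=1}^N (E_n^{\max} - Q_n)^2$ absorbs the worst-case squared per-slot energy residual using the per-slot constraint \eqref{cons:E_max}. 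Adding the penalty term $V\sum_m \mathbb{E}[D_m^t(\a^t,\b^t)\mid\q(t)]$ to both sides gives the drift-plus-penalty inequality, and the crucial observation is that AFC chooses $(\a^{*,t},\b^{*,t})$ to minimize exactly the right-hand side expression (the objective \eqref{eq:p2_obj}) over all feasible decisions. Hence AFC's drift-plus-penalty is no larger than that of \emph{any} other feasible policy, in particular the optimal offline policy for \textbf{P1} and the $\epsilon$-surplus stationary policy referenced in the statement.

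For the delay bound, I would compare against the optimal offline solution to \textbf{P1}. Since that solution satisfies the long-term energy constraint \eqref{cons:LT_eng}, plugging it into the right-hand side makes the $\sum_n q_n(t)(\cdot - Q_n)$ terms sum to something non-positive in the long-run average sense, leaving $\Delta(\q(t)) + V\mathbb{E}[\sum_m D_m^t] \le B + V D^{\text{opt}}$ (after dealing with the subtlety that \textbf{P1}'s constraint is on the time-average, not per-slot — this is handled by the standard argument of invoking a stationary randomized policy arbitrarily close to $D^{\text{opt}}$, or by Caratheodory-type / limiting arguments on the time-averaged constraint). Summing over $t = 0,\dots,T-1$, dividing by $T$, using $L(\q(0)) = 0$ and $L(\q(T)) \ge 0$, and taking $T \to \infty$ yields $\lim_{T\to\infty}\frac{1}{T}\sum_t \mathbb{E}[\sum_m D_m^t] \le D^{\text{opt}} + B/V$.

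For the energy-deficit bound, I would instead substitute the stationary policy that achieves energy surplus $\epsilon > 0$, i.e., a policy whose expected per-slot energy consumption at each FN is at most $Q_n - \epsilon$. Feeding this policy's decisions into the minimized right-hand side gives $\Delta(\q(t)) + V\mathbb{E}[\sum_m D_m^t] \le B - \epsilon\sum_n q_n(t) + V D^{\max}$, where I bound that policy's delay cost by $D^{\max}$ via \eqref{cons:D_max}. Rearranging to isolate $\epsilon\sum_n q_n(t)$, dropping the non-negative penalty term, telescoping over $t$, and dividing by $T$ bounds $\frac{1}{T}\sum_t\sum_n \mathbb{E}[q_n(t)] \le \frac{B + V(D^{\max} - D^{\text{opt}})}{\epsilon} + o(1)$ (using $D^{\text{opt}}$ as a lower bound on AFC's realized delay to tighten the numerator). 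Finally, relating the time-average queue backlog to the time-average energy-constraint violation via the queue dynamics — specifically $q_n(t+1) \ge q_n(t) + E_n^t - Q_n$ so that $\frac{1}{T}\sum_t (E_n^t - Q_n) \le \frac{q_n(T)}{T}$, combined with a bound on $\mathbb{E}[q_n(T)]$ — delivers the claimed energy-deficit bound.

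The main obstacle I anticipate is the rigorous handling of the mismatch between the long-term (time-averaged) energy constraint in \textbf{P1} and the per-slot structure of the drift argument: one must justify the existence of a stationary randomized policy that (i) achieves delay arbitrarily close to $D^{\text{opt}}$ while meeting the energy constraint, and (ii) the separate $\epsilon$-surplus policy, both under the i.i.d./stationarity assumptions on the system randomness (demands $d_m^t$, channel states, $r^t$, $h^t$). This typically invokes a theorem guaranteeing that the optimal time-average performance over all policies is achievable (in the limit) by stationary randomized ones — a standard but nontrivial ingredient from the Lyapunov optimization literature \cite{neely2010stochastic} — and care is needed because the decision space here is mixed-integer, so I would appeal to the fact that the per-slot feasible set (the finite set $\prod_i \mathcal{F}_i$ for hosting crossed with a compact set for admission) still admits the required randomized-policy machinery.
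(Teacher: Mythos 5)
Your proposal is correct and follows essentially the same route as the paper: the standard Lyapunov drift-plus-penalty argument with the quadratic Lyapunov function on the energy-deficit queues, the constant $B=\frac{1}{2}\sum_{n}(E_n^{\max}-Q_n)^2$ obtained from the per-slot cap \eqref{cons:E_max}, comparison against a stationary randomized policy achieving $D^{\text{opt}}$ for the delay bound, and against the $\epsilon$-surplus (Slater) policy for the queue/deficit bound. The subtleties you flag — existence of the optimal stationary randomized policy over the mixed-integer per-slot feasible set and the translation from time-average queue backlog to time-average constraint violation — are exactly the points the paper handles by invoking the machinery of \cite{neely2010stochastic}, so no further changes are needed.
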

\begin{proof}
	See online Appendix A \cite{onlineappendix}.
\end{proof}	

Theorem \ref{theo:AFC_bound} demonstrates an $[O(1/V ), O(V )]$ \emph{delay-energy deficit}  trade-off. Specifically, the asymptotic expected system delay cost achieved by AFC is no higher than the optimal delay performance of the offline problem \textbf{P1} plus a term $B/V$ where $B$ is a constant. Therefore, by choosing a large $V$, AFC is able to achieve the optimal system delay cost. However, a lower service delay is achieved at the price of a higher energy consumption. As presented in \eqref{energy_defit_bound}, the expected energy deficit is bounded by $[B+V(D^{\text{max}}-D^{\text{opt}})]/\epsilon$ and hence a large $V$ may incur a large energy consumption. To complete the algorithm, it remains to solve the optimization problem \textbf{P2}, which will be discussed in the next section.

\section{Distributed Optimization with Gibbs Sampler} \label{sec:CPGS}
The problem \textbf{P2} is a mixed-integer nonlinear programming. While there exist various techniques (such as Generalized Benders Decomposition) to solve it, these methods are usually centralized. When a centralized control is absent for information collection and centralized coordination, a distributed solution is desired so that each FN or (a small subset of FNs) can be configured in a distributed way. In the following, we develop a distributed algorithm based on Gibbs sampling techniques \cite{geman1984stochastic} for solving \textbf{P2}. Since \textbf{P2} is solved in each time slot $t$, we drop the time index in the rest of this section.

\subsection{Gibbs Sampling for Fog Configuration}
 In problem \textbf{P2}, the service hosting profile $\a$ and task admission profile $\b$ have to be jointly optimized. Fortunately, the task admission decisions of FNs are fully decoupled if the Fog configuration profile $\a$ is determined in advance. The optimal $\b$ can be easily derived because \textbf{P2} is a linear programming with a fixed $\a$. Since each $\a$ associates with an optimal $\b$, we denote $\b$ as a deterministic function of $\a$, denoted by $\tilde{\b}(\a) = \arg \min_{\b} V \sum_{m=1}^{M}  D_m(\a, \b) + \sum_{n=1}^N q_nE_n(\a,\b)$. For ease of exposition, we write $\tilde{\b}(\a)$ as $\tilde{\b}$ in the rest of the paper. Now, we restate \textbf{P2} with only $\a$ as a variable:
\begin{subequations}
	\begin{align}
	\textbf{P2-S}~\min_{\a} \mathcal{L}(\a)  =  V  & \sum_{m=1}^{M} D_m(\a, \tilde{\b}) + \sum_{n=1}^N q_n E_n(\a,\tilde{\b}) \\
    \text{s.t.}~~~&\eqref{cons:capacity}, \eqref{cons:E_max}, \eqref{cons:D_max}
	\end{align}
\end{subequations}

\textbf{P2-S} is a non-convex combinatorial optimization and it cannot be solved with many distributed algorithms based on Alternating Direction Method of Multipliers (ADMM) and Dual Decomposition \cite{boyd2011distributed}, since these methods usually require the problem to be convex. While there exist distributed algorithms for non-convex combinatorial optimization, e.g. Distributed Stochastic Algorithm \cite{zhang2002distributed}, most of them provide only convergence to local minimum. In the following, we leverage Gibbs Sampling (GS) to solve \textbf{P2-S}. The key advantage of GS is that it is able to converge to the global optimum with cooling schedule \cite{welsh1989simulated}. The GS is carried out in an iterative manner: in each iteration, an FN is selected (randomly or according to a predetermined order) to sample a new hosting decision from the conditional distribution with the remaining FNs fixing their decisions. The theory of Markov chain Monte Carlo guarantees that the probability of choosing a service hosting profile $\a$ is proportional to $e^{-\mathcal{L}(\a)/\sigma}$, which is known as the Gibbs distribution. Moreover, running GS while reducing the temperature parameter $\sigma$ can obtain the optimal $\a$ that minimizes the objective \cite{welsh1989simulated}. However, conventional GS is performed in sequential (i.e., one FN updates decision at a time) and the sequential GS has two main drawbacks: (1) it takes too long to complete one round of decision updating for large networks, (2) it works with an additional assumption that the global communication for information exchange is available, which may not hold in distributed Fog systems. To address these problems, we propose Chromatic Parallel Gibbs Sampling (CPGS) to enable the distributed decision making. It is worth noting that the convergence results of GS are typically available for sequential Gibbs samplers and extreme parallelism often cannot guarantee the ergodicity and convergence to Gibbs distribution \cite{newman2008distributed}. The proposed CPGS carefully transforms sequential GS into an equivalent parallel sampling by exploiting the special structure of considered Fog network using \emph{Markov Random Field} and \emph{Graph Coloring}, such that the ergodicity and convergence are preserved.

\subsection{Chromatic Parallel Gibbs Sampling}
To reach the Gibbs distribution, the sequential GS samples hosting decision $a_i(\tau+1)$ at FN $i$ for iteration $(\tau+1)$ according to a posterior conditional distribution $p$ calculated based on the hosting profile $\a(\tau)$ in iteration $\tau$ as follows:
\begin{align*}
a_i(\tau+1) \sim p(a_i | \a_{-i}(\tau)) \triangleq \frac{\exp(-\mathcal{L}(\{a_i, \a_{-i}(\tau)\})/\sigma)}{\sum\limits_{a_i^\prime\in\mathcal{F}_i}\exp(-\mathcal{L}(\{a_i^\prime, \a_{-i}(\tau)\})/\sigma)}
\end{align*}
where $\a_{-i}$ refers to the hosting decisions of FNs excluding the FN $i$. Intuitively, if the decision update at FN $i$ does not affect update of FN $j$ and vice versa, (i.e., $p(a_i|\a_{-i})$ and $p(a_j|\a_{-j})$ are independent), then FNs $i$ and $j$ can update their decisions independently and simultaneously. To formalize this, we resort to the Markov Random Field (MRF). An MRF is an undirected graph over FNs' hosting decisions. On this graph, the set of decisions adjacent to $a_i$, denoted by $\Gamma_i$, is called the Markov Blanket of $a_i$. Given $\Gamma_i$, the decision update of FN $i$ is conditionally independent of FNs outside $\Gamma_i$, namely $p(a_i | \a_{-i})=p(a_i | \a_{\Gamma_i})$. Therefore, any two FNs that are not in the Markov Blanket of each other can evolve their decisions simultaneously. The next proposition establishes a connection between the physical fog network and the MRF.

\begin{proposition}\label{theo:MFR}
	For FN $i\in\mathcal{N}$, the Markov Blanket $\Gamma_i$ of $a_i$ consists of the service hosting decisions of FNs in the two-hop neighborhood of FN $i$ on the physical fog network graph $G$; and the probability distribution for updating $a_i\in\mathcal{F}_i$ is:
	\begin{align}\label{samp_dist}
	p(a_i | \a_{\Gamma_i})=\frac{\exp\left(-\mathcal{L}_{\Omega_i}(\{a_i, \a_{\Gamma_i}\})/\sigma\right)}{\sum_{a^\prime\in\mathcal{F}_i}\exp(-\mathcal{L}_{\Omega_i}(\{a^\prime, \a_{\Gamma_i}\})/\sigma)}
	\end{align}
	where $\mathcal{L}_{\Omega_i}= \sum\limits_{m\in\bigcup_{k\in\Omega_i}\mathcal{M}_k} V D_m(\a,\tilde{\b})+ \sum\limits_{n\in\Omega_i} q_nE_n(\a,\tilde{\b})$, and $\Omega_i$ is the one-hop neighborhood of FN $i$, including FN $i$ itself.
\end{proposition}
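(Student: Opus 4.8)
The claim has two parts — identifying $\Gamma_i$ as the two-hop neighborhood of $i$ in $G$, and establishing the local form \eqref{samp_dist} — and I would obtain both from one structural fact: the only summands of the Gibbs exponent $\mathcal L(\a)$ that vary with the single coordinate $a_i$ are exactly those that also appear in $\mathcal L_{\Omega_i}$, and those summands depend on the remaining coordinates only through the hosting decisions of FNs lying within two hops of $i$. The plan is: (i) reduce $p(a_i\mid\a_{-i})$ to the $a_i$-dependent part of $\mathcal L$; (ii) trace how a change in $a_i$ propagates along $\a\mapsto v_m\mapsto(\lambda_{n,k},\tilde b_n)\mapsto(D_m,E_n)$ and show it touches only the terms of $\mathcal L_{\Omega_i}$; (iii) bound the $G$-distance of the FNs whose decisions those terms actually see.

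For (i), $p(a_i\mid\a_{-i})\propto\exp(-\mathcal L(\{a_i,\a_{-i}\})/\sigma)$ with the normalization taken over $a_i\in\mathcal F_i$, so every additive term of $\mathcal L$ that is independent of $a_i$ appears identically in the numerator and in each summand of the denominator and cancels; only the $a_i$-dependent part survives. For (ii), the routing rule $v_m=\arg\max_{n\in\mathcal B_m(\a)}H_{m,n}$ shows $v_m$ can move with $a_i$ only if $i\in\mathcal N_m$, i.e. $m\in\mathcal M_i$; and whenever $v_m$ can equal $n$ with $i\in\mathcal N_m$, FNs $i$ and $n$ share SN $m$, so $n\in\Omega_i$. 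Since with $\a$ fixed the per-slot objective is an LP that decouples across FNs (noted just before \textbf{P2-S}), the optimal admission $\tilde b_n$ is a function only of $\lambda_{n,\cdot}(\a)$ and $q_n$; hence both $\lambda_{n,k}$ and $\tilde b_n$ are perturbed by $a_i$ only for $n\in\Omega_i$. A delay term $D_m$ changes with $a_i$ only if $v_m$ changes (forcing $m\in\mathcal M_i$) or $\tilde b_{v_m}$ changes (forcing $v_m\in\Omega_i$, hence $m\in\mathcal M_{v_m}\subseteq\bigcup_{k\in\Omega_i}\mathcal M_k$); the key is that a change in any $\tilde b_n$ never reroutes traffic, so the propagation stops here. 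Thus $\mathcal L-\mathcal L_{\Omega_i}$ is free of $a_i$ and $p(a_i\mid\a_{-i})\propto\exp(-\mathcal L_{\Omega_i}(\a)/\sigma)$.

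For (iii) I would use that any $G$-neighbor of a $G$-neighbor of $i$ lies within two hops of $i$. Each surviving $E_n$ ($n\in\Omega_i$) is a function of $\a_{\Omega_n}$, and $\Omega_n$ sits inside the closed two-hop ball $\Gamma_i\cup\{i\}$. Each surviving $D_m$ is a function of $v_m$ — hence of $\a_{\mathcal N_m}$, with $\mathcal N_m\subseteq\Gamma_i\cup\{i\}$ because some FN of $\mathcal N_m$ is in $\Omega_i$ — and of $\tilde b_{v_m}$, hence of $\a_{\Omega_{v_m}}$; a naive bound would place $\Omega_{v_m}$ three hops out, but step (ii) showed $a_i$ actually moves $D_m$ only when $v_m\in\Omega_i$, and then $\Omega_{v_m}\subseteq\Gamma_i\cup\{i\}$ as well. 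Hence the $a_i$-dependent part of $\mathcal L_{\Omega_i}$ is a function of $a_i$ and $\a_{\Gamma_i}$ only, so $p(a_i\mid\a_{-i})=p(a_i\mid\a_{\Gamma_i})$; normalizing $\exp(-\mathcal L_{\Omega_i}(\{a_i,\a_{\Gamma_i}\})/\sigma)$ over $a_i\in\mathcal F_i$ then yields \eqref{samp_dist}.

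I expect the delicate step to be exactly this two-versus-three-hop collapse in (iii): the admitted-fraction factor $\tilde b_{v_m}$ inside $D_m$ superficially reaches into the neighborhood of a distance-two FN, and closing the argument at two hops requires both the locality of the admission map (the LP decoupling noted just before \textbf{P2-S}) and the observation that admission decisions never reroute traffic. A second point needing a word of care is that the per-slot delay cap \eqref{cons:D_max} is a global constraint on $\b$, whereas the argument treats $\tilde\b$ as determined FN by FN; one should either work in the regime where \eqref{cons:D_max} does not couple the admission decisions, or impose it at a coarser granularity, consistent with the two-timescale model adopted here.
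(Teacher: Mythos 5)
Your proof is correct and takes the same dependency-tracing route that any proof of this proposition must take (and that the paper's online-appendix proof takes): cancel the $a_i$-independent summands of $\mathcal{L}$ in the conditional, show via the chain $\a\mapsto v_m\mapsto\lambda_{n,\cdot}\mapsto\tilde{\b}\mapsto(D_m,E_n)$ that every $a_i$-varying summand lies in $\mathcal{L}_{\Omega_i}$, and then bound the hosting decisions those summands actually see within two hops of $i$. The two caveats you flag are both genuine and correctly handled: the three-hop-to-two-hop collapse does hinge on $\tilde{b}_{v_m}$ mattering only when $v_m\in\Omega_i$ together with the fact that admission never reroutes traffic, and the per-FN decoupling of $\tilde{\b}$ implicitly requires that the global delay cap \eqref{cons:D_max} not couple the admission variables, which is an assumption inherited from the paper's own statement preceding \textbf{P2-S} rather than a gap in your argument.
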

\begin{proof}
See online Appendix B \cite{onlineappendix}.
\end{proof}

Proposition 1 implies that when FN $i$ changes its decision $a_i$, the change in $\mathcal{L}_{\Omega_i}$ is the same no matter how a FN outside of $\Gamma_i$ also changes its configuration at the same time. This property is the key to enabling the parallelism in GS. Moreover, we know from \eqref{samp_dist} that a decision is selected with a higher probability if it leads to a lower cost of FNs in $\Omega_i$.

Given this, we divide all FNs into $L \leq N$ groups such that no two FNs within a group are in each other's Markov Blanket and hence, FNs within the same group can update their configuration in parallel. Intuitively, we would like to minimize $L$ in order to achieve the maximal level of parallelization. Finding the minimum value of $L$ is equivalent to a graph coloring problem on the MRF. Suppose a MRF is colored with $L$-coloring, each FN will be assigned one of $L$ colors and FNs in its Markov Blanket will have a different color. The colored MRF ensures that all FNs with the same color are conditionally independent of each other in configuration update. Let $l_i$ denote the FNs in color $i\in L$.

\subsection{Distributed Algorithm based on CPGS}
\begin{figure*}[b]
	\centering
	\includegraphics[width=1 \linewidth]{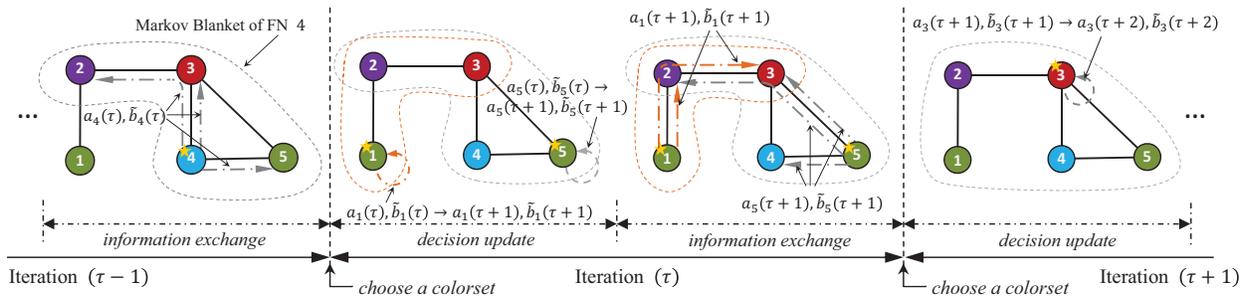}
	\caption{Illustration of CPGS for Fog configuration. It depicts a physical network with 5 FNs and the \textit{star sign} denotes the chosen colorset in each iteration.}
	\label{alg_illu}
\end{figure*}
Now, we present the distributed algorithm to solve \textbf{P2} based on CPGS (Algorithm \ref{dis_alg_p2}). The algorithm works in an iterative manner as illustrated in Fig. \ref{alg_illu}. In each iteration $\tau$, a colorset $l_j$ is chosen according to a prescribed order or randomly. Each FN $i \in l_j$ goes through two steps: \textit{decision update} and \textit{information exchange}. To update the hosting decision, FN $i$ needs two pieces of information: (1) the service demand of its one-hop neighbors, which is exchanged at the beginning of each time slot $t$; (2) the current hosting decision of FNs within its \textit{Markov Blanket} $\Gamma_i$, which are exchanged in the previous iteration. With this information, FN $i$ computes $\mathcal{L}_{\Omega_i}(a_i,\a_{\Gamma_i}(\tau)), \forall a_i\in\mathcal{F}_i$ locally by fixing the decisions of FNs in $\Gamma_i$. Then, FN $i$ samples a new $a_i(\tau+1)$ according to the probability distribution in \eqref{samp_dist}. After the hosting decisions are updated, the chosen FNs send new decisions to the FNs in $\Gamma_i$, which prepares for the next iteration. Note that during the iterations, FNs do not need to actually change configuration, which is only needed after the completion of CPGS.
\begin{algorithm}[t]
	\caption{Chromatic Parallel Gibbs Sampling}
	\begin{algorithmic}[1]
		\State \textbf{Input}: Service demand $\d$, $L$-colored MRF.
		\For {each iteration $\tau$}
		\parState {Choose a colorset of FNs $l_j$ randomly;}
		\For {each FN $i \in l_j$}
		\For {each feasible hosting decision $a_i \in \mathcal{F}_i$}
		\parState{Determine the association strategies of SNs within the coverage of $\Omega_i$;}
		\parState{Compute $\mathcal{L}_{\Omega_i}(\{a_i, \a_{\Gamma_i}(\tau)\})$ by fixing the configuration of FNs in $\Gamma_i,\forall i\in l_\tau$;}
		\EndFor
		\parState {Sample $a_i(\tau+1)$ according to \eqref{samp_dist} and send $a_i(\tau+1)$ to FNs in $\Gamma_i$;}
		\EndFor
		\State Stop iteration if service delay cost converges; 
		\EndFor
		\State \textbf{Return}: optimal configuration profile $\a^*$.
	\end{algorithmic}\label{dis_alg_p2}
\end{algorithm}

\subsection{Performance Analysis of CPGS}
Next, we prove the convergence and optimality of CPGS.

\begin{proposition} [Convergence and Optimality] \label {theo:converge}
	CPGS converges from any initial distribution to Gibbs distribution $\pi(\a)$
	\begin{align}
	\pi(\a)=\dfrac{e^{-\mathcal{L}(\a)/\sigma}}{\sum_{\a\in\mathcal{F}_1 \times \dots \times \mathcal{F}_N} e^{-\mathcal{L}(\a)/\sigma}},
	\end{align}
	and as $\sigma \to 0$, CPGS converges to the global optimum with probability 1.	
\end{proposition}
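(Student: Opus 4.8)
The plan is to read one round of CPGS as a single transition of a time-homogeneous Markov chain on the finite state space $\mathcal{S}=\mathcal{F}_1\times\cdots\times\mathcal{F}_N$ of feasible hosting profiles (profiles violating \eqref{cons:E_max} or \eqref{cons:D_max} are excluded by convention, equivalently by setting $\mathcal{L}(\a)=+\infty$ there so that $\pi(\a)=0$), and then to verify the three classical ingredients — invariance of $\pi$, irreducibility, and aperiodicity — before passing to the $\sigma\to0$ limit. The step I would do first, and which is the crux of the whole argument, is to record why \emph{parallel} updating over a colorset is legitimate: because the $L$-coloring gives distinct colors to any two FNs that lie in each other's Markov Blanket, every colorset $l_j$ is an independent set of the MRF. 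Hence, conditioned on the decisions $\a_{-l_j}$ of all FNs outside $l_j$, the decisions $\{a_i\}_{i\in l_j}$ are mutually independent under $\pi$, and by Proposition~\ref{theo:MFR} each factor obeys $\pi(a_i\mid\a_{-l_j})=\pi(a_i\mid\a_{\Gamma_i})=p(a_i\mid\a_{\Gamma_i})$. Consequently the simultaneous resampling carried out by CPGS in iteration $\tau$ is exactly a draw from the block-conditional $\pi(\a_{l_j}\mid\a_{-l_j})=\prod_{i\in l_j}p(a_i\mid\a_{\Gamma_i})$; i.e. a colorset update \emph{is} a bona fide block-Gibbs step with target $\pi$.

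Given that identification, invariance is immediate: the block-Gibbs kernel $P_{l_j}(\a,\a')=\mathbf{1}\{\a'_{-l_j}=\a_{-l_j}\}\,\pi(\a'_{l_j}\mid\a_{-l_j})$ leaves $\pi$ stationary by the standard computation $\sum_{\a}\pi(\a)P_{l_j}(\a,\a')=\pi(\a')$, and the per-iteration kernel — a uniform mixture $\tfrac{1}{L}\sum_j P_{l_j}$ for a randomized color schedule, or the composition $P_{l_1}\cdots P_{l_L}$ for a cyclic one — is a mixture/composition of $\pi$-invariant kernels and hence itself $\pi$-invariant. For irreducibility and aperiodicity I would use that $\pi(\a)>0$ on all of $\mathcal{S}$ and that $p(a_i\mid\a_{\Gamma_i})$ puts strictly positive mass on every $a_i\in\mathcal{F}_i$, in particular on the current value (a self-loop, giving aperiodicity), while a block of $L$ colorset updates touches every FN at least once, so any feasible profile is reached from any other with positive probability. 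The only delicate point is connectivity of $\mathcal{S}$ under colorset moves when the coupling constraints \eqref{cons:E_max}, \eqref{cons:D_max} are active; I would handle this by assuming, as is implicit in the model, that the feasible region is reachable, or by observing that in the regime of interest the binding restriction is the capacity constraint \eqref{cons:capacity}, which is already encoded separately in each $\mathcal{F}_i$. The convergence theorem for finite irreducible aperiodic Markov chains then yields convergence in distribution to the unique stationary law $\pi$ from an arbitrary initial distribution.

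For the cooling limit I would factor $e^{-\mathcal{L}^{\min}/\sigma}$ with $\mathcal{L}^{\min}=\min_{\a\in\mathcal{S}}\mathcal{L}(\a)$ out of the numerator and denominator of $\pi$, giving $\pi(\a)=e^{-(\mathcal{L}(\a)-\mathcal{L}^{\min})/\sigma}\,/\,\sum_{\a'}e^{-(\mathcal{L}(\a')-\mathcal{L}^{\min})/\sigma}$; for any non-optimal $\a$ the exponent is a strictly negative constant times $1/\sigma$, so the numerator vanishes as $\sigma\to0$ while the denominator tends to the number $|\mathcal{A}^{\ast}|$ of global minimizers, whence $\pi$ converges to the uniform distribution on the optimal set $\mathcal{A}^{\ast}$. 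Combined with a sufficiently slow cooling schedule on $\sigma$, this is exactly the claim that CPGS outputs a global optimum of \textbf{P2-S} with probability one. The main obstacle I anticipate is not any single estimate but making the first step airtight — that parallelism across a colorset does not destroy the Gibbs stationary distribution — since this is precisely where the independent-set property of the coloring and the two-hop Markov-Blanket characterization of Proposition~\ref{theo:MFR} must be used together, and it is the place where a naive "update every FN at once" parallelization would fail.
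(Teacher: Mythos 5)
Your proposal is correct and follows essentially the same route as the paper: the crux in both is that an $L$-coloring of the MRF makes each colorset an independent set, so by Proposition~\ref{theo:MFR} the simultaneous update over a colorset factorizes into the product of the conditionals $p(a_i\mid\a_{\Gamma_i})$ and is therefore a legitimate ($\pi$-invariant) block-Gibbs step equivalent to a sequential sweep, after which the standard finite-chain ergodicity argument gives convergence to $\pi$ and the usual annealing computation shows $\pi$ concentrates on the global minimizers as $\sigma\to 0$. Your extra remark about connectivity of the feasible set under constraints \eqref{cons:E_max} and \eqref{cons:D_max} is a point the paper does not dwell on, but it does not change the argument.
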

\begin{proof}
	See online Appendix C \cite{onlineappendix}.
\end{proof}

Following the classic result of parallel computing in \cite{bertsekas1989parallel}, we can analyze the time complexity of CPGS:
\begin{proposition} [Time complexity]\label{theo:runtime}
	Given a $L$-coloring of the MRF, CPGS generates a new configuration profile for the FN network in a runtime complexity of $O(L)$.
\end{proposition}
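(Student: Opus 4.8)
The plan is to measure runtime in the standard parallel-computation model of \cite{bertsekas1989parallel}---one unit of time is one synchronous round in which a group of processors (here, FNs) performs local computation together with a bounded amount of message passing with its graph neighbors---and then to establish two facts: (a) each iteration of CPGS costs $O(1)$ in this model, and (b) $L$ iterations are both enough and necessary to refresh the hosting decision of every FN exactly once, i.e., to produce a new configuration profile $\a$ for the whole network.

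For (a), I would bound the per-FN work in a single iteration. When colorset $l_j$ is activated, each FN $i\in l_j$ (i) enumerates its feasible hosting decisions $a_i\in\mathcal{F}_i$---by the capacity constraint \eqref{cons:capacity} there are at most $\sum_{c=0}^{C}\binom{K}{c}$ of them, a number that does not depend on $N$; (ii) for each $a_i$, evaluates $\mathcal{L}_{\Omega_i}(\{a_i,\a_{\Gamma_i}\})$, which by the definition in Proposition \ref{theo:MFR} only aggregates the delay terms of the SNs in $\bigcup_{k\in\Omega_i}\mathcal{M}_k$ and the energy terms of the FNs in $\Omega_i$, using the demand information of its one-hop neighbors (exchanged once at the start of the slot) and the decisions $\a_{\Gamma_i}$ (received in the previous iteration)---again a quantity determined solely by the local neighborhood, not by the global network size; and (iii) draws a single sample from the finite distribution \eqref{samp_dist} and transmits it to the FNs in $\Gamma_i$. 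Hence one iteration is $O(1)$ with respect to $N$.

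For (b), I would use the coloring together with Proposition \ref{theo:MFR}. The $L$-coloring partitions $\mathcal{N}$ into colorsets $l_1,\dots,l_L$, each an independent set of the MRF; since the MRF places an edge between two FNs exactly when one lies in the other's Markov Blanket---equivalently, when they are within two hops on $G$ by Proposition \ref{theo:MFR}---no two FNs of the same color lie in each other's Markov Blanket. Therefore, during the iteration that activates $l_j$, the conditioning variables $\a_{\Gamma_i}$ of every $i\in l_j$ are held fixed (no member of $\Gamma_i$ has color $j$), so the identity $p(a_i\mid\a_{-i})=p(a_i\mid\a_{\Gamma_i})$ from Proposition \ref{theo:MFR} lets all of $l_j$ sample simultaneously with exactly the same joint law as updating its members one at a time in any order. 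Cycling once through $l_1,\dots,l_L$ thus updates each FN exactly once and yields a fresh profile $\a$, in $L$ iterations; combining with (a) gives total runtime $L\cdot O(1)=O(L)$.

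I expect the main obstacle to be the consistency argument in (b) rather than the counting in (a): one must be careful that the locally computed cost $\mathcal{L}_{\Omega_i}$ that FN $i$ uses is precisely the one appearing in the valid conditional law \eqref{samp_dist}, and that the Markov Blanket relevant here is the \emph{two}-hop neighborhood on $G$ (Proposition \ref{theo:MFR}), not the one-hop neighborhood---so a proper coloring of the MRF must, and does, assign distinct colors to all FNs within two hops. Once this point is pinned down, the $O(L)$ bound follows simply by counting the colorsets in one sweep.
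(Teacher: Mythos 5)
Your argument is correct and follows essentially the same route the paper takes: invoking the parallel-computation model of the cited reference, using the $L$-coloring of the MRF together with the conditional-independence property from Proposition \ref{theo:MFR} to justify simultaneous updates within each colorset, and counting one $O(1)$ synchronous round per color to obtain $O(L)$ total. Your added care about the per-iteration work being independent of $N$ and about the Markov Blanket being the two-hop neighborhood on $G$ is consistent with, and slightly more explicit than, the paper's treatment.
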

\begin{proof}
     See online Appendix D \cite{onlineappendix}.
\end{proof}

Proposition \ref{theo:runtime} indicates that CPGS advances the sampling chain for an $L$-coloring FN network in runtime $O(L)$ rather than $O(N)$. Typically, $L$ is much smaller than $N$, thereby accelerating the convergence speed of AFC.

\section{Simulation}\label{sec:simulation}
In this section, we evaluate the performance of AFC with Matlab simulations. We consider a 500m$\times$500m industrial plant served by 16 battery-powered FNs deployed with mesh layout. The service hosting capacity of FNs is set as $C=2$. Each FN has a long-term energy constraint $Q_n=10$ W$\cdot$h and unit energy consumption for each FN is set as $\kappa_n=6\times10^{-9}$ W$\cdot$h. The serving radius of FNs is set as 120m which creates co-coverage in our problem. The SNs are randomly scattered in the network using a homogeneously Poisson Process with density $0.1$, which generates a total of 52 SNs. Each SN is randomly assigned with one service type $\theta(m)$ from a total of $K = 6$ services. For each type-$k$ service, the input data size and required CPU cycles for one task are randomly drawn from $\gamma \in [0.5,1]$MB and $\mu_k\in[50,200]$M, respectively. The decision cycle for fog configuration (length of time slot) is set as 1 min. At the beginning of each time slot, FNs decide configurations with proposed algorithm and broadcast to SNs. Each SN observe the service availability at reachable FNs and determine SN-FN based on the fog configurations and wireless channel condition. The wireless channel condition is determined by the free space path-loss $L[\text{dB}] = 42.6 + 26\log_{10}(d[\text{km}])+20 \log_{10}(2,400[\text{MHz}])$ and the shadowing which is randomly drawn from a normal distribution $N(0,16)$. Till the end of the time slot, SNs can send tasks to the associated FN. The task generation of SNs follows a Poisson process with arrival rate $d^t_m\in[0,10]$. The FNs process the received tasks locally or offload to the cloud server via backbone Internet depending on the service availability and task admission decisions. The CPU frequency is 2 GHz at FNs and 4 GHz at cloud. The backbone transmission rate is $r^t \in [2,6]$Mb/s and the round-trip time is $h=200$ms. The proposed algorithm is compared with three benchmarks:
\begin{itemize}
	\item \emph{Delay-optimal Fog configuration} (D-optimal): FNs are configured to minimize the service delay regardless of the long-term energy constraints. It is a combinatorial optimization and can be solved by Gibbs sampling with simulated annealing \cite{welsh1989simulated}.
	\item \emph{Non-cooperative Fog configuration} (NCOP): Each FN works independently to serves a dedicated set of SNs. Therefore, FNs simply host the services with the largest expected demand. The long-term energy consumption constraints are enforced by Lyapunov optimization \cite{neely2010stochastic}.
	\item Single-slot constraint (SSC): Instead of following the long-term constraints, FNs impose an energy constraint $E^t_n(\a^t,\b^t)\leq Q_n$ in each time slot $t$ such that the long-term energy constraints are satisfied. SSC is fully decoupled temporally and therefore can be solved by Constraint Gibbs sampling \cite{ermon2012uniform}.
\end{itemize}
\begin{figure}[htb]
	\centering	
	\vspace{-0.15 in}
	\subfigure[Time average delay]{\label{fig:Tave_delay}
		\includegraphics[width=0.55\linewidth]{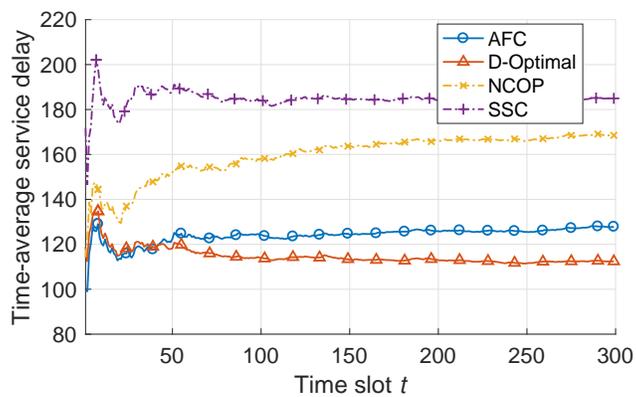}}
	\subfigure[Time average energy consumption]{\label{fig:Tave_energy}
		\includegraphics[width=0.55\linewidth]{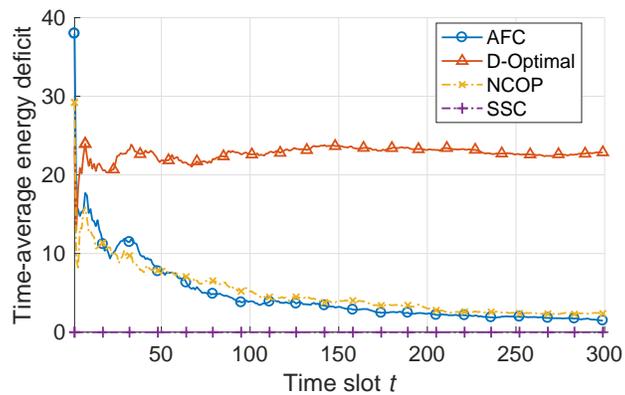}}
	\caption{Run-time performance comparison}
	\label{fig:performance_comp}
\end{figure}

\subsection{Run-time Performance Comparison}
Fig.\ref{fig:Tave_delay} and Fig. \ref{fig:Tave_energy} depict the time-average service delay and time-average energy deficit, respectively.
It can be seen that AFC achieves the close-to-optimal service delay while closely following the long-term energy constraints. Specifically, D-optimal achieves the lowest service delay since it is designed to minimize the service delay by fully exploiting the computation resource at FNs regardless of the energy constraints. As a result, D-optimal incurs a large amount of energy deficit as shown in Fig. \ref{fig:Tave_energy}. The main purpose of AFC is to follow the long-term energy constraint of each FN while minimizing the service delay. As can be observed in Fig. \ref{fig:Tave_energy}, the time-average energy deficits of AFC and NCOP converge to zero, which means that the long-term energy constraints are satisfied. Moreover, AFC achieves a close-to-optimal delay performance. By contrast, NCOP incurs a large service delay with cooperation removed. The SSC scheme poses an energy constraint in each time slot thereby satisfying the long-term energy constraints. However, SSC makes the energy scheduling less flexible across time slots and hence cannot handle well the temporal variation of service demand and results in a large service delay.

\subsection{Service Demand Allocation}
Next, we proceed to see how AFC works to benefit the Fog system. The key idea of AFC is to accommodate more service demand at FNs thereby avoiding the large service delay incurred by Cloud offloading. Fig. \ref{fig:demanddist} depicts the allocation of service demand in the first 20 time slots when running AFC and NCOP. We see clearly that AFC allows more service demand to be processed within the Fog system by enabling the collaborative service hosting. By contrast, NCOP relies more on the cloud server to process SNs' service demand.
\begin{figure}[htb]
	\centering
	\includegraphics[width=0.55\linewidth]{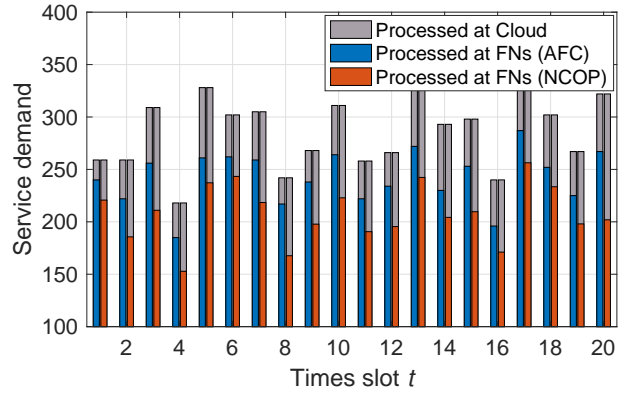}
	\caption{Service demand allocation}
	\label{fig:demanddist}
\end{figure}

\subsection{Impact of control parameter $V$}
Figure \ref{fig:tradeoff} shows the impact of control parameter $V$ on the performance of AFC. The result presents a $[O(1/V), O(V)]$ trade-off between the long-term system delay cost and the long-term energy deficit, which is consistent with our theoretical analysis in Theorem \ref{theo:AFC_bound}. With a larger $V$, AFC emphasizes more on the system delay cost and is less concerned with the energy deficit. As $V$ grows to the infinity, AFC is able to achieve the optimal delay cost.
\begin{figure}[htb]
	\centering
	\includegraphics[width=0.55\linewidth]{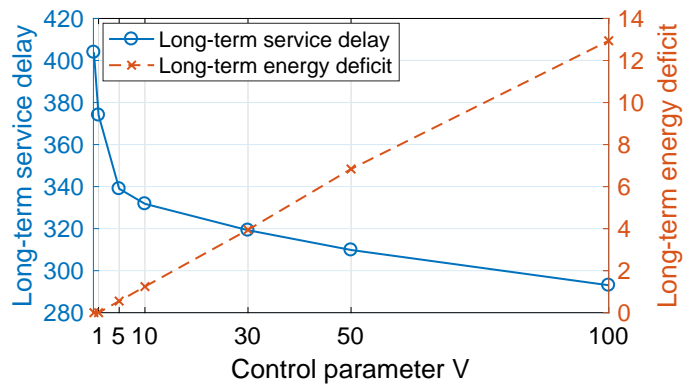}
	\caption{Impact of control parameter $V$}
	\label{fig:tradeoff}
\end{figure}

\subsection{Service Hosting Capacity}
Fig. \ref{fig:varycapacity} depicts the long-term service delay achieved by AFC and NCOP with different service hosting capacities. It can be observed that the service delay decreases with the increase in the service hosting capacity for both AFC and NCOP. This is due to the fact that more service demand can be satisfied by the FNs without sending to the remote cloud. Moreover, by comparing AFC and NCOP, we see that the delay reduction achieved by AFC decreases as the service hosting capacity grows. This is because most services can be satisfied by an individual FN with large service hosting capacity, and therefore the role of collaboration is diminished.

\begin{figure}[htb]
	\centering
	\includegraphics[width=0.55\linewidth]{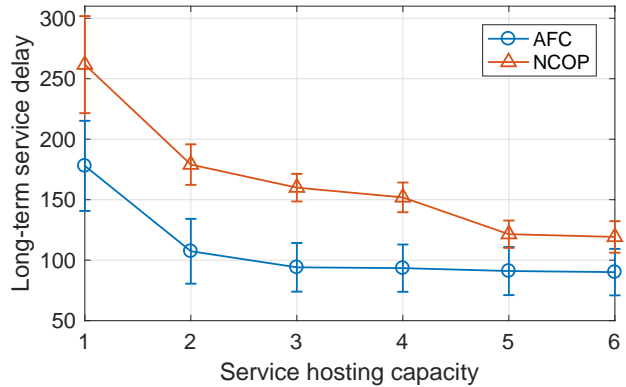}
	\caption{Impact of service hosting capacity}
	\label{fig:varycapacity}
\end{figure}

\subsection{Convergence of CPGS}
Fig. \ref{fig:cpgs} compares the convergence processes of CPGS and sequential GS for one time slot. It can be observed that CPGS converges much faster than the sequential GS. In this particular example, CPGS converges in 10 iterations, by contrast, the sequential GS takes 20 iterations to converge. Moreover, we see that CPGS and the sequential GS converges to the same optimal value, which means that CPGS preserves the features of ergodicity and optimality of sequential GS.
\begin{figure}[htb]
	\centering
	\includegraphics[width=0.55\linewidth]{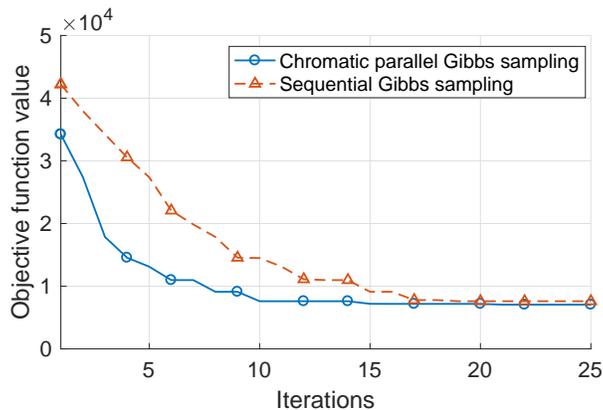}
	\caption{Convergence of CPGS}
	\label{fig:cpgs}
\end{figure}

\section{Conclusion} \label{sec:conclusion}
In this paper, we studied adaptive Fog configuration under energy constraints for IIoT systems. We proposed AFC, an online distributed algorithm for Fog configuration adaptive to both temporal and spatial service demand patterns. The proposed algorithm is easy to implement and provides provable performance guarantee. This work makes a valuable step towards optimizing the performance of Fog systems by considering service availability at fog nodes. However, future efforts are required to put the proposed framework into real-world application. For example, real demand traces of industrial things are preferred for the algorithm evaluation; practical issues such as communication failure should be considered during the distributed optimization; real IIoT platform can be constructed to run and verify the efficacy of the proposed framework.

\bibliographystyle{IEEEtran}
\bibliography{refs}

\end{document}